\documentclass[runningheads]{llncs}
\usepackage[T1]{fontenc}
\usepackage{graphicx}
\usepackage{amsmath}
\usepackage{amssymb}

\newtheorem{assumption}{Assumption}
\begin{document}
\title{Q-GARS: Quantum-inspired Robust Microservice Chaining Scheduling}
%
%
\author{Huixiang Zhang\inst{1}\orcidID{0009-0006-5250-2442} \and
Mahzabeen Emu\inst{2,3}\orcidID{0000-0002-0433-1873}}
\authorrunning{H. Zhang et al.}
%
\institute{Lakehead University, Thunder Bay Ontario P7B 5E1, Canada \and
Memorial University, St. John's Newfoundland and Labrador, A1C 5S7 Canada \and
Quantum Communications and Computing Center, St. John's Newfoundland and Labrador, A1C 5S7 Canada}
\maketitle              
\begin{abstract}
Microservice-based applications are characterized by stochastic latencies arising from long-tail execution patterns and heterogeneous resource constraints across computational nodes. To address this challenge, we first formulate the problem using Quadratic Unconstrained Binary Optimization (QUBO), which aligns the problem with emerging quantum-optimization paradigms. Building upon this, we propose Q-GARS (Quantum-Guided Adaptive Robust Scheduling), a hybrid framework that integrates the QUBO model with Simulated Quantum Annealing (SQA) based combinatorial search and online rescheduling mechanisms, enabling global microservice rank generation and real-time robust adjustment. We treat the SQA-produced rank as a soft prior, and update a closed-loop trust weight to adaptively switch and mix between this prior and a robust proportional-fairness allocator, maintaining robustness under prediction failures and runtime disturbances. Simulation results demonstrate that Q-GARS achieves an average weighted completion time improvement of 2.1\% relative to a shortest-remaining-processing-time (SRPT) greedy baseline, with performance gains reaching up to 16.8\% in heavy-tailed latency. The adaptive mechanism reduces tail latency under high-variance conditions. In addition, Q-GARS achieves a mean node resource utilization rate of 0.817, which is 1.1 percentage points above the robust baseline (0.806).

\keywords{Simulated Quantum Annealing  \and QUBO \and Robust Optimization.}
\end{abstract}
\section{Introduction}

As the computation capability of personal or edge devices grows, some latency-sensitive microservices can be deployed on edge nodes close to the user to improve the quality of service (QoS) \cite{fu2022}. For interactive online services, microservices are typically composed as directed acyclic graphs \cite{hu2023}. These graphs process end-to-end requests. In this architecture, performance bottlenecks often focus on node-local runtime scheduling \cite{li2014}. Concurrent microservice instances from multiple workflows compete for multidimensional computing resources on the same physical node. This is a key driver of end-to-end tail latency and service-level objectives.

In the cloud edge continuum, microservice execution times are highly stochastic \cite{li2014}. Input fluctuations and multi-tenant interference can create heavy-tailed latency. This can trigger head-of-line blocking. Meanwhile, service meshes provide traffic management and observability. However, their data-plane proxies may also introduce extra overhead and amplify latency variability \cite{zhu2023}. Therefore, a scheduler must operate within millisecond-scale decision windows. The scheduler needs to handle discrete priority decisions and continuous multi-resource allocation constraints within millisecond-scale budgets, while remaining robust to prediction errors.

Fig. \ref{fig:problem_illustration} depicts a microservice-based user-facing application in the cloud–edge continuum, highlighting the end-to-end request paths and node-local scheduling bottlenecks. Requests arrive at the edge ingress, where IAM authentication is performed followed by a cache lookup. On a cache hit, the response is returned directly from the edge cache (fast path). On a cache miss, the request is enqueued to the edge OCR microservice for text extraction, and then forwarded to the compute-intensive inference pipeline, which is typically offloaded to a pool of cloud-side inference replicas. These stages create two contention points—the edge OCR queue and the cloud inference queue—where online scheduling decisions (ranking and rate allocation) directly shape end-to-end tail latency. In particular, heavy-tailed service times and volatile arrivals (exacerbated by cache hit/miss traffic splitting) can trigger head-of-line blocking, so that a few long-tail tasks amplify queuing delays even when aggregate resources are sufficient, leading to QoS/SLA violations under bursts.

\begin{figure}[htbp]
    \centering
    \includegraphics[width=\linewidth]{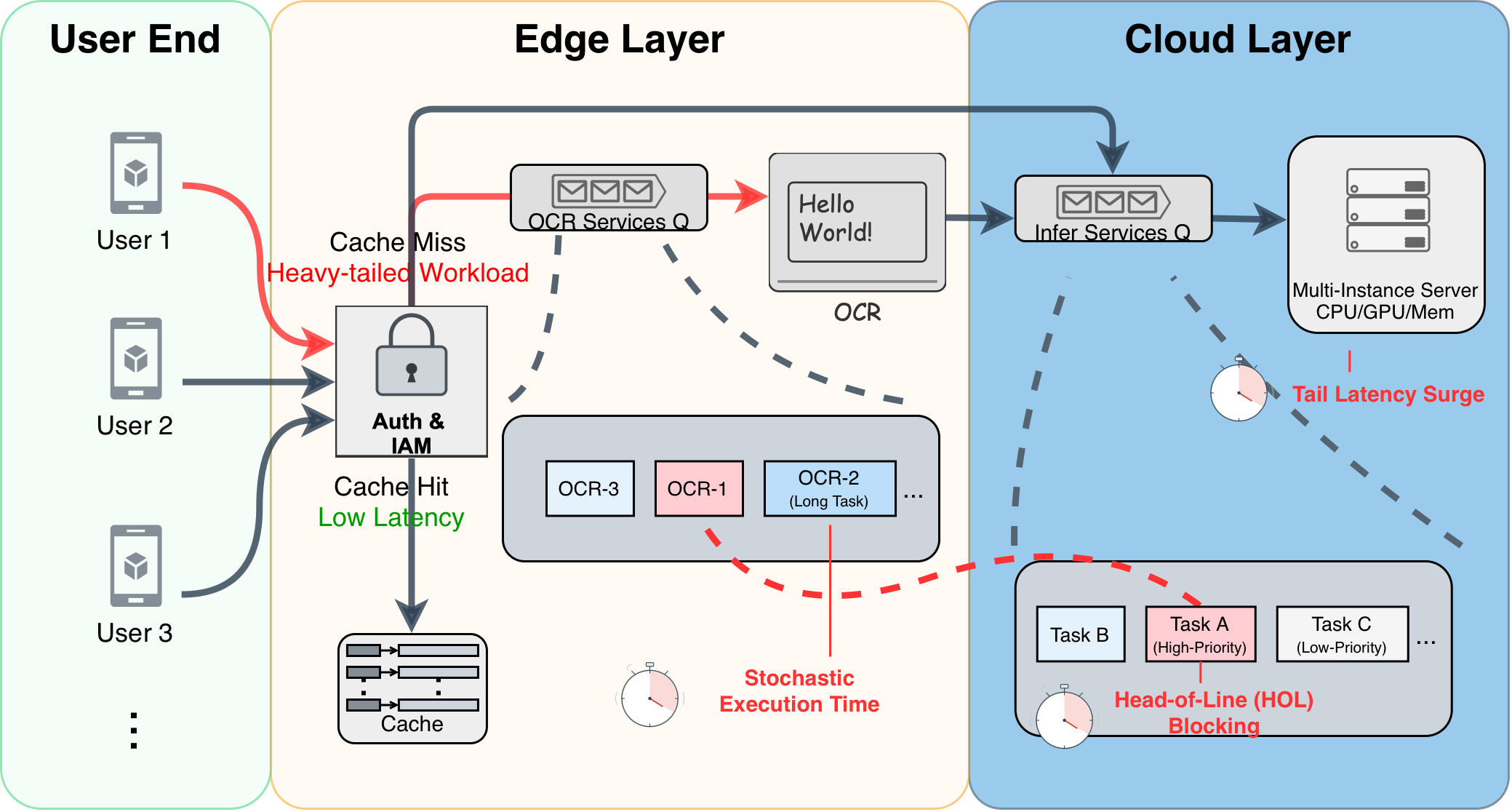} 
    \caption{End-to-end request paths and node-local scheduling bottlenecks in a microservice-based user-facing application.}
    \label{fig:problem_illustration}
\end{figure}

Consequently, an effective node-level scheduler must simultaneously execute discrete prioritization (to bypass HOL blocking) and continuous multidimensional rate allocation (to manage resource contention), all within a millisecond-scale online budget. To tractably solve this daunting mixed-integer optimization challenge, we propose a Quantum-Inspired and Adaptive Robust Scheduling framework (Q-GARS). The key idea is to decompose the node-level scheduling decision into two steps. First, we encode the priority ranking of ready microservice instances as a quadratic unconstrained binary optimization (QUBO) model. We use an annealing-type metaheuristic algorithm to generate a high-quality discrete ranking prior under strict time budgets. Subsequently, conditioned on the ranking, we compute continuous execution rates by network utility maximization (NUM) \cite{kelly1998}. This yields a resource slicing strategy that satisfies multidimensional capacity constraints.

Because the ranking prior relies on runtime estimates, predictions may fail under severe volatility. We adopt an adaptive robust execution mechanism. The system treats the previous policy as an untrustworthy expert \cite{purohit2018,lindermayr2025}. It competes in parallel with a stable robust baseline policy. The system adjusts the trust in the prior online via an exponential weight update. The scheduler ultimately executes an adaptive convex combination of the two. This design preserves empirical performance gains when the predictions are accurate. It provides a provable performance guaranty relative to the baseline when predictions degrade.

We systematically evaluated the framework on a discrete-event simulation platform. We construct controllable heterogeneous resource constraints and stochastic latency injections. This tests system stability and graceful degradation behavior under heavy-tailed latency and telemetry disturbances. The main contributions of this paper are summarized as follows:

1. Decoupled node-level scheduling architecture: We map QUBO-based discrete rankings to continuous execution rates satisfying resource constraints via NUM and dynamic penalty calculation.

2. Robust adaptive execution mechanism: We provide a sublinear regret guarantee relative to the robust baseline when mixing the prior and the baseline via exponential weights, yielding a worst-case safeguard under non-stationary shocks. This mechanism limits the 95th percentile queue backlog peak to 20\%-30\% of the baseline.

3. Through discrete-event simulations at scale (4096 parallel runs), the framework reduces the average weighted completion time by 2.1\%, with improvements of up to 16.8\% in complex topologies. In high-volatility scenarios, the system effectively suppresses tail latency and ensures rapid recovery.

The rest of the paper is organized as follows. Section 2 reviews related work. Section 3 presents the system model and the global optimization objective. Section 4 presents the node-level QUBO formulation and the implementation of the adaptive mechanism. Section 5 reports on the experimental setup and key results. Section 6 concludes the paper with limitations and future plans.

\section{Related Work}

In the traditional operations research domain, microservice scheduling and service chaining problems are often formulated as Integer Linear Programming (ILP) or Mixed-Integer Linear Programming (MILP) models. 
For example, Harutyunyan et al. formulate a joint user association, service function chain placement, and resource allocation problem using ILP, followed by a heuristic to address scalability \cite{harutyunyan2019}. Similarly, Kiji et al. formulate multicast service chaining as an ILP that jointly decides VNF placement and routing to minimize deployment and link-usage costs \cite{kiji2019}. 
Beyond service chaining, ILP is also a standard tool for delay-sensitive distributed server allocation and fault-tolerant provisioning, where exact formulations are often accompanied by approximation or heuristic methods for efficiency \cite{kawabata2023,inoue2026}. 
Such formulations provide a principled way to encode global coupling constraints and yield optimal solutions when solved to optimality, but they often require heuristics, relaxations, or approximation algorithms to meet practical scalability and online time budgets.

To overcome the computational bottleneck of exact ILP/MILP solvers, researchers have begun adopting emerging computing paradigms such as Quantum Annealing (QA) and its classical counterpart, Simulated Quantum Annealing (SQA). Zhang et al. further map online prioritization decisions to a rank-based QUBO formulation, enabling fast permutation search under tight decision windows \cite{zhang2022}. 
Although many QA scheduling demonstrations are conducted on canonical OR benchmarks (e.g., JSP/RCPSP), these models capture core structures that also arise in microservice workflow scheduling: precedence constraints induced by service chains and shared (multi-)resource capacity constraints at execution sites. 
For example, Pérez Armas et al. study the resource-constrained project scheduling problem (RCPSP), systematically compare multiple MILP formulations, convert a qubit-efficient formulation into a QUBO with explicit penalty terms for hard constraints, and evaluate it on the D-Wave Advantage 6.3 quantum annealer \cite{perezarmas2024}. This line of work supports using QUBO/QA as a time-budgeted combinatorial search primitive for resource-constrained scheduling, an abstraction that naturally aligns with node-level microservice scheduling in the cloud-edge continuum.

From a cloud systems perspective, recent microservice scheduling framework emphasizes end-to-end QoS under communication overhead and multi-tenant resource contention. Fu et al. propose Nautilus, which combines a communication-aware microservice mapper with contention-aware per-node resource management and runtime migration mechanisms to improve resource efficiency while protecting tail-latency objectives under interference \cite{fu2022}. These systems demonstrate the necessity of fast, interference-aware control loops inside the cloud-edge continuum. Their node-level controllers typically rely on learned policies or heuristic rules rather than an explicit, time-budgeted combinatorial search primitive that can rapidly explore priority permutations under strict millisecond deadlines. 
This motivates our decoupled design that uses QUBO/SQA to generate a lightweight structural prior for ordering decisions, while retaining continuous multi-resource allocation and robust safeguarding to handle stochastic latencies.

\section{System Model}
In this section, we formalize the microservice scheduling problem within the cloud-edge architecture as a discrete-time control model. We adopt a two-timescale architecture to manage the system. The global orchestrator handles service placement logic over coarse-grained periods. The node-level scheduler executes local control at millisecond-scale decision epochs. At each discrete decision epoch, the system observes and extracts the set of ready tasks that have satisfied their upstream directed acyclic graph dependencies as the current state. The control action of the system is to compute the instantaneous resource allocation rates subject to the multidimensional capacity limits of the heterogeneous nodes. The global optimization objective of the system is to minimize the total weighted completion time of all microservice workflows. Given the large scale of active workflows and microservices, the global objective function is computationally intractable to solve directly within a single decision epoch. The node-level controller utilizes local queueing delay and resource contention penalties as a tractable surrogate for this global objective to make real-time decisions. Table \ref{tab:notations} summarizes the core mathematical notations used throughout this paper.

\begin{table}[h]
    \centering
    \caption{Summary of Core Sets, States, and Decision Variables}
    \label{tab:notations}
    \renewcommand{\arraystretch}{1.3}
    \begin{tabular}{|l|l|}
        \hline
        \multicolumn{2}{|c|}{Sets and System Parameters} \\
        \hline
        Notation & Description \\
        \hline
        $\mathcal{J}$ & Set of user request workflows, indexed by $j$ \\
        $\mathcal{M}$ & Set of heterogeneous compute nodes, indexed by $m$ \\
        $C_m$ & Multidimensional computing resource capacity vector $C_m \in \mathbb{R}_+^d$ \\
        $\omega_j$ & Service level agreement weight of workflow $j$ \\
        \hline
        \multicolumn{2}{|c|}{State and Control Variables} \\
        \hline
        Notation & Description \\
        \hline
        $t$ & Discrete decision epoch \\
        $S_m(t)$ & Node-local active ready microservice set extracted at decision epoch $t$ \\
        $\mathbf{r}_v(t)$ & Multidimensional execution rate vector allocated to microservice $v$ \\
        $\mathbf{r}^*(t)$ & Final deterministic mixed execution rate vector \\
        $\lambda(t)$ & Adaptive trust parameter for the online learning framework \\
        \hline
        \multicolumn{2}{|c|}{Intermediate Algorithmic Variables (Section 4)} \\
        \hline
        Notation & Description \\
        \hline
        $r$ & Discrete execution rank index for microservices \\
        $c_{v,r}$ & Base linear proxy cost of assigning microservice $v$ to rank $r$ \\
        $\theta_v(t)$ & Continuous computing resource allocation share for microservice $v$ \\
        \hline
    \end{tabular}
\end{table}

\subsection{Infrastructure and System State}

We model the physical cloud-edge infrastructure as a set of heterogeneous compute nodes $\mathcal{M}$. Each node $m \in \mathcal{M}$ possesses a multidimensional resource capacity vector $C_m \in \mathbb{R}_+^d$. This vector covers hardware limits such as CPU, memory, and network bandwidth. These limits mathematically define a multidimensional packing polytope. This establishes the physical boundary for local node scheduling. User service requests are represented as a set of workflows $\mathcal{J}$. Each workflow $j \in \mathcal{J}$ is mapped to a directed acyclic graph $G_j = (V_j, E_j)$. The vertices $v \in V_j$ represent specific microservice tasks. The directed edges $(u,v) \in E_j$ represent strict upstream data dependencies between tasks. We denote the execution node assigned to task $v$ by the global orchestrator as $\mu(j,v) \in \mathcal{M}$. Because the actual execution latencies of microservices are non-clairvoyant, the node-level scheduler acts solely based on the observable state at discrete decision epochs $t$. A key component of the observable local state is the node-local ready set $S_m(t)$. The system telemetry state simultaneously includes the queue backlog $q_v(t)$ and the observed enqueue rate $a_v(t)$ for task $v$. Task $v$ is included in $S_m(t)$ and becomes eligible to compete for available intra-node resources only when all its upstream predecessor tasks $u$ have completed and $\mu(j,v) = m$.

\subsection{Control Action and Surrogate Objective}
After we defined the system state, we further specify the control action and the optimization objective. At each decision epoch $t$, the scheduler on node $m$ allocates an instantaneous multidimensional execution rate vector $\mathbf{r}_v(t) \in \mathbb{R}_+^d$ to each task $v$ in the local active ready set $S_m(t)$. The aggregate resource allocation for all concurrent tasks must satisfy the multidimensional packing polytope constraint. We mathematically define this multidimensional packing polytope as a time-dependent set of decision vectors. Its form is $\mathcal{P}_m(t) = \{ \{ \mathbf{r}_v(t) \}_{v \in S_m(t)} : \mathbf{r}_v(t) \ge 0, \sum_{v \in S_m(t)} \mathbf{r}_v(t) \le C_m \}$. The global optimization objective of the system is to minimize the total weighted completion time of all workflows. This objective is formulated as $\min \sum_{j \in \mathcal{J}} \omega_j C_j$. Here $C_j$ represents the end-to-end completion time of the final microservice task in workflow $j$. Because of the non-clairvoyant nature of the real physical environment, the actual execution latency $p_v$ of a microservice is stochastic and unknown. The system can only observe this latency upon task completion. This makes the global objective function intractable to solve directly within a single decision epoch. To obtain an implementable online scheduler, we approximate the global weighted-completion-time objective with a node-local surrogate that can be evaluated at each millisecond-scale decision epoch.
Intuitively, the scheduler should (i) prioritize tasks with large backlogs (to reduce queueing delay and tail latency), and (ii) avoid co-running strongly interfering tasks (to mitigate resource contention).
Accordingly, we use a surrogate objective of the form
\begin{equation}\label{eq:obj}
\min \sum_{v \in S_m(t)} \phi\!\left(q_v(t), \mathbf r_v(t)\right)\;+\!\!\sum_{v<u}\psi\!\left(\mathbf r_v(t), \mathbf r_u(t)\right),
\end{equation}
where $\phi$ quantifies per-task delay pressure.
We instantiate it as
\begin{equation}\label{eq:phi}
  \phi(q_v, r_v) \;=\; \frac{q_v}{\theta_v},
\end{equation}
where $\theta_v$ is the scalar resource share defined in Section~4.3. This function is strictly decreasing and strictly convex in $\theta_v$
for $\theta_v > 0$; this convexity property is used in the theoretical guarantee of Section~4.4. The term $\psi$ quantifies pairwise contention for shared resources.
These two components serve different architectural phases of the framework. In Section~4.2, $\phi$ and $\psi$ are jointly encoded into a permutation-based QUBO model: $\phi$ induces the linear ranking cost $c_{v,r}$ (urgency of placing $v$ at rank $r$), and $\psi$ induces the pairwise interference coefficient $Q_{v,u}$ modulated by a distance-decaying kernel $g(\cdot)$, enabling a QUBO encoding for fast permutation search under strict time budgets. In the adaptive execution phase of Section~4.4, the permutation has already been fixed and the contribution of $\psi$ has been absorbed into the discrete ranking result. The real-time shadow loss is therefore computed
from the aggregate $\phi$ values only, which preserves the convexity of the per-epoch cost with respect to the rate vector.

\section{Quantum-Inspired and Adaptive Robust Scheduling Framework}

This section presents an online scheduling framework to approximately solve the local surrogate objective introduced in Section 3 under millisecond-level decision budgets. Fig. \ref{fig:framework_illustration} illustrates the overall architecture of this framework. We use a local QA simulator driven by SQA algorithms. Because our framework models the scheduling problem as a standardized QUBO via the simulator, it is entirely decoupled from the underlying solver. This design allows our system to seamlessly switch to real physical QA devices once edge-quantum hardware matures or communication latency bottlenecks are resolved. To achieve this goal, we adopt a four-phase design: local state extraction and dimensionality reduction, quantum-inspired structural prior generation, continuous rate allocation, and adaptive robust execution.

\begin{figure}[htbp]
    \centering
    \includegraphics[width=\linewidth]{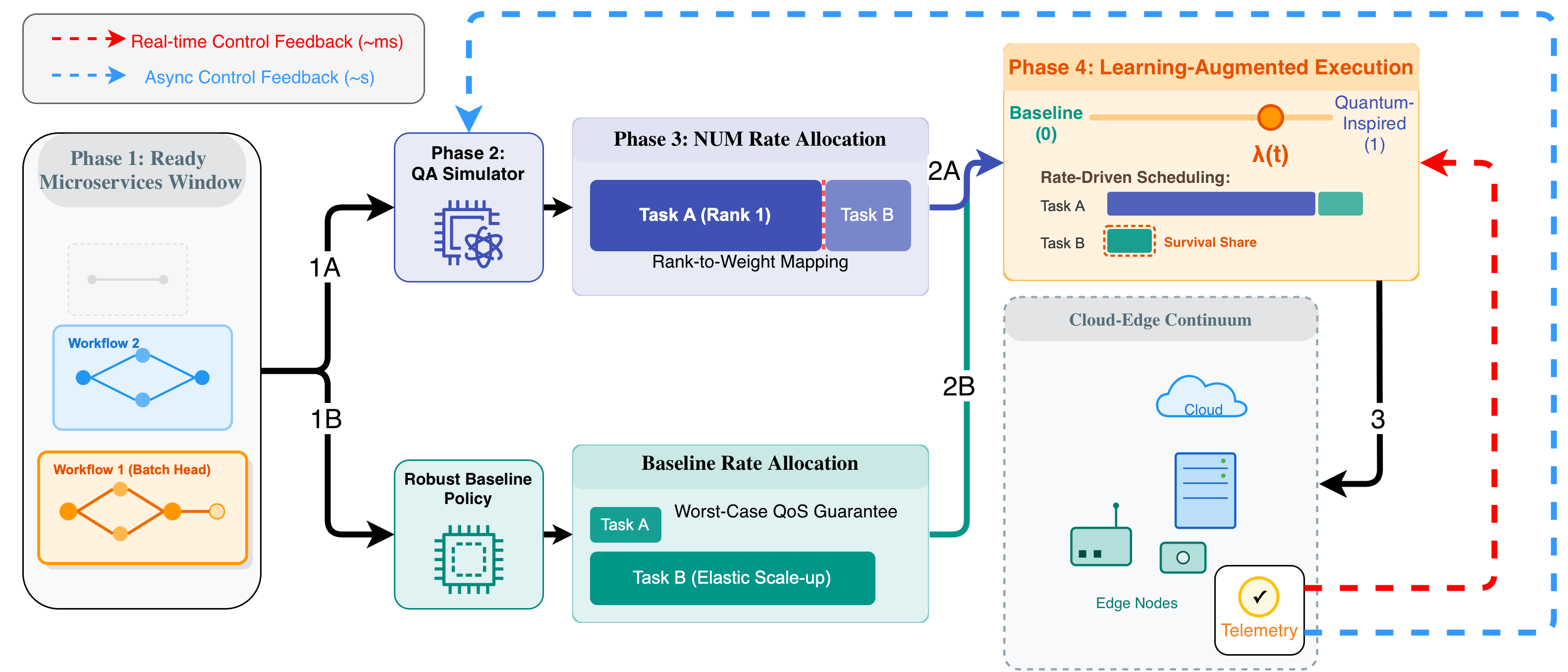} 
    \caption{Overview of the closed-loop adaptive scheduling framework integrating SQA and real-time feedback control.}
    \label{fig:framework_illustration}
\end{figure}

\subsection{State Extraction and Dimensionality Reduction}
In the first phase, at each decision epoch $t$, the controller at the node-level extracts the local active ready set of the node $S_m(t)$ from the acyclic graphs directed to the workflow, illustrated by the leftmost module in Fig. \ref{fig:framework_illustration}. Since microservices in $S_m(t)$ have satisfied all hard upstream dependencies, they are eligible to compete for intra-node computing resources. Directly optimizing the continuous multidimensional execution rate vector $\mathbf{r}_v(t)$ for all ready microservices is prohibitive. This is due to the combinatorial structure induced by contention and priority interactions. To limit the combinatorial search space, we retain only the top candidate microservices $K$ according to a lightweight priority score. We use the arrival order as a tie-breaker. This forms a dynamic active window $\tilde{S}_m(t)$. This yields $O(K^2)$ binary variables in the subsequent permutation-indexed QUBO encoding.

\subsection{QUBO Formulation and Rank Generation}
In the second phase, we map the microservice prioritization into a QUBO model. We define binary decision variables $x_{v,r} \in \{0,1\}$ for the active window $\tilde{S}_m(t)$. This variable equals 1 if microservice $v \in \tilde{S}_m(t)$ is assigned to the $r$-th execution rank. Otherwise, this variable equals 0. Here $1 \le r \le K$. The complete objective function consists of a cost part and a penalty part. Its formula is $H(\mathbf{x}) = H_{\text{cost}}(\mathbf{x}) + H_{\text{pen}}(\mathbf{x})$.

The cost function captures the linear delay penalties and soft non-linear computing resource contention among microservices:
\begin{equation}
H_{\text{cost}}(\mathbf{x}) = \sum_{v \in \tilde{S}_m(t)} \sum_{r=1}^{K} c_{v,r} x_{v,r} + \sum_{v, u \in \tilde{S}_m(t), v \neq u} \sum_{1 \leq s < r \leq K} g(r-s) Q_{v,u} x_{v,r} x_{u,s}   
\end{equation}

The term $c_{v,r}$ represents the base linear proxy cost of assigning a microservice $v$ to the rank $r$. It is induced by the delay-pressure function $\phi$ (Section~3.2) under the rank substitution and is defined as $c_{v,r} = q_v(t) \cdot r$, where $q_v(t)$ is the current queue backlog of $v$. This definition captures a direct scheduling intuition: tasks with larger backlogs incur higher cost when placed at lower priority rank positions. Because explicit concurrent execution overlap is difficult to model using pure ordinal variables, we introduce $Q_{v,u}$ as an order-induced interference proxy that captures the contention intensity between two tasks for shared resources. We define it as $Q_{v,u} = q_v(t)\, q_u(t) \,/\, (\max_{w \in \tilde{S}_m(t)} q_w(t))^2$, which ensures $Q_{v,u} \in [0,1]$. The distance decay kernel is defined as $g(\Delta) = \exp(-\beta\,\Delta)$, where $\beta > 0$ is a decay rate parameter and $\Delta = r - s \geq 1$. This kernel satisfies $g(\Delta) \in (0,1)$ and decreases with rank separation. Its physical interpretation is that the externality of resource contention imposed on a microservice $v$ at rank $r$ by a competing microservice $u$ at a higher-priority rank $s$ decays as the rank gap between them increases.

To convert the problem into an unconstrained form, the penalty function enforces a strict one-to-one mapping between tasks and ranks via quadratic terms:
$$
H_{\text{pen}}(\mathbf{x}) = A \sum_{v \in \tilde{S}_m(t)} \left(\sum_{r=1}^{K} x_{v,r} - 1\right)^2 + B \sum_{r=1}^{K} \left(\sum_{v \in \tilde{S}_m(t)} x_{v,r} - 1\right)^2
$$
$A$ and $B$ are penalty coefficients. Static empirical lower bounds often generate excessively large penalty weights. These weights overwhelm the underlying cost optimization objective and cause the solver to stall in local optima. To resolve this issue within millisecond budgets, we adopt a dynamic maximum marginal cost method to automatically set the penalty coefficients. Given that the distance-decaying kernel satisfies $0 \le g(r-s) \le 1$, we can safely omit $g(\cdot)$ to form a conservative upper bound for quadratic interactions. During the construction of the cost matrix, the system calculates the maximum local objective increment that can occur when any single microservice is assigned. We set the penalty coefficients as $A = B = \max_{v \in \tilde{S}_m(t)} ( \max_r c_{v,r} + \sum_{u \neq v} Q_{v,u} )$. This choice serves as a conservative sufficient condition. It ensures that the penalty incurred by any single-variable move violating the constraints cannot be offset by an improvement in the cost Hamiltonian $H_{\text{cost}}$. This dynamic calculation strictly bounds the time complexity to $O(K^2)$. It not only generates adaptive penalty bounds that ensure feasibility dominance but also avoids the additional latency imposed on the control loop by complex matrix parsing operations.

\subsection{Continuous Computing Resource Allocation}
In the third phase, the system translates the discrete microservice permutation into continuous resource allocation. Because the data dependency of microservices and resource interference exhibit strong non-convex combinatorial optimization properties, directly solving for continuous rates natively leads to severe local optima. Therefore, we adopt the decoupled scheduling principles common in modern operating systems. The system first determines the execution rank of the microservices. Subsequently, the system partitions the continuous computing resources.

Our system translates the discrete ordering into a continuous relative weight $\tilde{w}_v(t)$. This soft isolation mechanism eliminates the system overhead incurred by hard preemption (e.g., context switching) \cite{fried2020}. To ensure deterministic weight assignment and strictly favor critical microservices, we adopt an exponential decay mapping strategy. For a microservice $v \in \tilde{S}_m(t)$ assigned to the execution rank $r$ (where $r \in \{1, 2, \dots, K\}$), its weight is defined as $\tilde{w}_v(t) = \gamma^{K-r}$. Here, $\gamma > 1$ is a tunable decay factor. For ready microservices that do not fall into the truncated active window, namely $v \in S_m(t) \setminus \tilde{S}_m(t)$, the system assigns a minimal base weight $0 < \epsilon \ll 1$ to prevent complete starvation without disrupting the active window's dominance.

Finally, to obtain a closed-form solution that satisfies multidimensional resource physical constraints under millisecond budgets, we allocate resources based on the classical NUM theory \cite{palomar2007}. We define the computing resource allocation share of microservice $v$ at time $t$ as a continuous scalar $\theta_v(t) \in (0, 1]$. To avoid complex multi-commodity flow calculations under strict millisecond latency budgets, we map this scalar uniformly across all hardware dimensions as a normalized computing resource share. Let $\mathbf{C}_m$ denote the multi-dimensional capacity vector of machine $m$; the execution rate vector of microservice $v$ is formalized as $\mathbf{r}_v^q(t) = \theta_v(t) \mathbf{C}_m$. This implies that the multidimensional physical constraint $\sum_{v \in S_m(t)} \mathbf{r}_v^q(t) \le \mathbf{C}_m$ is mathematically equivalent to a highly simplified scalar constraint $\sum_{v \in S_m(t)} \theta_v(t) \le 1$. Specifically, to achieve weighted proportional fairness, we introduce a logarithmic utility function as the optimization objective:
\begin{equation}
\begin{aligned}
\max_{\boldsymbol{\theta}(t)} \quad & \sum_{v \in S_m(t)} \tilde{w}_v(t) \log(\theta_v(t)) \\
\text{s.t.} \quad & \sum_{v \in S_m(t)} \theta_v(t) \le 1, \quad \theta_v(t) > 0, \; \forall v \in S_m(t)
\end{aligned}
\end{equation}

Because the objective function is strictly concave and the constraints form a standard probability simplex, strong duality holds for this optimization problem. By applying the Karush-Kuhn-Tucker conditions, we deduce that $\theta_v(t) \propto \tilde{w}_v(t)$ and the capacity constraint is tightly binding. This directly yields a closed-form continuous rate allocation scheme:
\begin{equation}
\theta_v^*(t) = \frac{\tilde{w}_v(t)}{\sum_{u \in S_m(t)} \tilde{w}_u(t)} \implies \mathbf{r}_v^q(t) = \left( \frac{\tilde{w}_v(t)}{\sum_{u \in S_m(t)} \tilde{w}_u(t)} \right) C_m
\end{equation}
This mechanism accurately preserves the relative priority structure implied by the discrete permutation via $\theta_v(t) \propto \tilde{w}_v(t)$. Furthermore, the aggregate normalization ensures that the allocation scheme fully utilizes the node capacity without violating the multidimensional strict boundary $\mathcal{P}_m(t)$. This achieves safe and instantaneous continuous resource slicing.

\subsection{Adaptive Robust Execution and Safety Guarantee}

The SQA-produced ranking prior can be unreliable under severe runtime volatility.
To ensure worst-case robustness, we adopt a learning-augmented execution rule with untrusted predictions \cite{purohit2018,lindermayr2025}.
At each decision epoch $t$, we treat the quantum-guided allocator and a robust baseline allocator as two competing experts, producing feasible rate vectors $\mathbf{r}^q(t)$ and $\mathbf{r}^b(t)$, respectively.
We maintain a trust weight $\lambda(t)\in[0,1]$ for the quantum-guided expert, initialized as $\lambda(1)=0.5$.

\subsubsection{Shadow Loss and Weight Update}

To obtain a fast and stable feedback signal within millisecond budgets, we use a lightweight shadow model to compute per-epoch surrogate losses $\hat{L}^q(t)$ and $\hat{L}^b(t)$ for the two experts.
Specifically, $\hat{L}^q(t)$ (resp.\ $\hat{L}^b(t)$) evaluates only the delay-pressure component $\sum_{v\in S_m(t)}\phi(q_v,\theta_v)$ of the surrogate objective defined in Section~3.2 at the rate vector $\mathbf{r}^q(t)$ (resp.\ $\mathbf{r}^b(t)$). Because the permutation is fixed in Phase~2, the pairwise contention term $\psi$ has been absorbed into the discrete ranking result and does not enter the shadow loss. Each epoch's shadow loss is normalized to $[0,1]$ by dividing by the maximum single-epoch delay pressure $\max_t \sum_{v} q_v(t)/\theta_v(t)$.

We update the trust weight using the standard exponential-weights (Hedge) rule:
\begin{equation}\label{eq:hedge_update}
\lambda(t+1)=\frac{\lambda(t)\exp\!\bigl(-\eta \hat{L}^q(t)\bigr)}{\lambda(t)\exp\!\bigl(-\eta \hat{L}^q(t)\bigr)+(1-\lambda(t))\exp\!\bigl(-\eta \hat{L}^b(t)\bigr)},
\end{equation}
where $\eta>0$ is a learning rate.

\subsubsection{Deterministic Rate Mixing}

The node-level scheduler executes a deterministic convex combination of the two feasible allocations:
\begin{equation}\label{eq:rate_mix}
\mathbf{r}^*(t)=\lambda(t)\,\mathbf{r}^q(t)+\bigl(1-\lambda(t)\bigr)\,\mathbf{r}^b(t).
\end{equation}
Because both $\mathbf{r}^q(t)$ and $\mathbf{r}^b(t)$ lie in the feasibility polytope $\mathcal{P}_m(t)$ and $\mathcal{P}_m(t)$ is convex, the mixed rate $\mathbf{r}^*(t)$ is also feasible for every $\lambda(t)\in[0,1]$.

\subsubsection{Regret Guarantee}

We now state the formal performance guarantee of the adaptive mechanism.

\begin{assumption}\label{as:bounded_loss}
The per-epoch surrogate losses satisfy $\hat{L}^q(t),\,\hat{L}^b(t)\in[0,1]$ for all $t$.
\end{assumption}

\begin{assumption}\label{as:convexity}
The per-epoch scheduling cost $c(\mathbf{r},t)$ is convex in the rate vector $\mathbf{r}$ for each decision epoch $t$.
\end{assumption}

Because the shadow loss is computed from the delay-pressure component $\phi$ alone (see above), Assumption~\ref{as:convexity} requires only that $\phi(q_v,\theta_v)=q_v/\theta_v$ is convex in $\theta_v$. For $\theta_v>0$, the second derivative $\mathrm{d}^2\phi/\mathrm{d}\theta_v^2 = 2q_v/\theta_v^3 > 0$, so Assumption~\ref{as:convexity} holds by construction.

\begin{proposition}\label{prop:regret}
Under Assumptions~\ref{as:bounded_loss} and~\ref{as:convexity}, the cumulative scheduling cost of the mixed policy $\mathbf{r}^*(t)$ over a horizon of $T$ decision epochs satisfies
\begin{equation}\label{eq:regret_bound}
\sum_{t=1}^{T} c\!\bigl(\mathbf{r}^*(t),t\bigr)
\;\le\;
\min\!\Biggl\{\sum_{t=1}^{T} c\!\bigl(\mathbf{r}^q(t),t\bigr),\;\sum_{t=1}^{T} c\!\bigl(\mathbf{r}^b(t),t\bigr)\Biggr\}
\;+\;\frac{\ln 2}{\eta}+\frac{\eta\, T}{8}\,.
\end{equation}
\end{proposition}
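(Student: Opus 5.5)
The plan is to reduce the statement to the classical Hedge regret bound for two experts, using convexity to pass from the mixed action to the mixture of losses. Throughout, $c(\mathbf{r},t)$ is identified with the normalized shadow loss, so that $c(\mathbf{r}^q(t),t)=\hat L^q(t)$ and $c(\mathbf{r}^b(t),t)=\hat L^b(t)$. By Assumption~\ref{as:bounded_loss}, both lie in $[0,1]$.

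\textbf{Step 1 (Jensen).} Since $\mathbf{r}^*(t)$ is the convex combination \eqref{eq:rate_mix} and $c(\cdot,t)$ is convex by Assumption~\ref{as:convexity}, we have
\[
c(\mathbf{r}^*(t),t)\le \lambda(t)\hat L^q(t)+(1-\lambda(t))\hat L^b(t)=:\bar L(t).
\]
It therefore suffices to show
\[
\sum_t \bar L(t)\le \min\Bigl\{\sum_t\hat L^q(t),\sum_t\hat L^b(t)\Bigr\}+\frac{\ln 2}{\eta}+\frac{\eta T}{8}.
\]
One caveat must be handled here. The normalization by $\max_t\sum_v q_v/\theta_v$ is a single positive constant, so it preserves convexity. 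We also note that $\mathbf{r}^*$ has $\theta_v>0$ whenever both experts do, which keeps $\phi$ finite.

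\textbf{Step 2 (potential).} Unroll \eqref{eq:hedge_update}: the weights are $w^q_t=\tfrac12 e^{-\eta\sum_{s<t}\hat L^q(s)}$ and $w^b_t=\tfrac12 e^{-\eta\sum_{s<t}\hat L^b(s)}$, with $\lambda(t)=w^q_t/W_t$ and $W_t=w^q_t+w^b_t$, so $W_1=1$. For the lower bound,
\[
\ln W_{T+1}\ge \ln w^x_{T+1}=-\ln 2-\eta\sum_t \hat L^x(t)
\]
for each $x\in\{q,b\}$.

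\textbf{Step 3 (Hoeffding).} For the upper bound, write
\[
\ln\frac{W_{t+1}}{W_t}=\ln \mathbb{E}_{X\sim\lambda(t)}\bigl[e^{-\eta X}\bigr],
\]
where $X$ takes the value $\hat L^q(t)$ with probability $\lambda(t)$ and $\hat L^b(t)$ otherwise. Since $X\in[0,1]$, Hoeffding's lemma gives
\[
\ln\frac{W_{t+1}}{W_t}\le -\eta\bar L(t)+\frac{\eta^2}{8}.
\]
Summing over $t$ yields $\ln W_{T+1}\le -\eta\sum_t\bar L(t)+\eta^2T/8$.

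\textbf{Step 4 (combine).} Combining the two bounds on $\ln W_{T+1}$ and dividing by $\eta$ gives
\[
\sum_t\bar L(t)\le \sum_t \hat L^x(t)+\frac{\ln 2}{\eta}+\frac{\eta T}{8}
\]
for each $x\in\{q,b\}$. Taking the minimum over $x$ and chaining with Step 1 yields \eqref{eq:regret_bound}.

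The main obstacle is Step 3, namely invoking Hoeffding's lemma with the correct constant $1/8$ (range $1$). The only other delicate point is the modeling identification of $c$ with the normalized shadow loss in Step 1, which must be stated explicitly so that Assumptions~\ref{as:bounded_loss} and~\ref{as:convexity} refer to the same function. A further subtlety is that the normalizing maximum over $t$ is not known online. Because it is one constant, however, it rescales all losses identically and does not affect the argument as stated.
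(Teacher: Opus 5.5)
Your proof is correct and follows the same route as the paper: Jensen's inequality on the convex per-epoch cost bounds $c(\mathbf{r}^*(t),t)$ by the $\lambda(t)$-mixture of the two experts' losses, and the two-expert Hedge regret bound then finishes the argument. The only differences are that you re-derive that bound via the weight potential and Hoeffding's lemma, which is the proof of the cited Theorem~2.2 of Cesa-Bianchi and Lugosi, and that you state explicitly the identification $c(\mathbf{r}^x(t),t)=\hat L^x(t)$, which the paper's proof leaves implicit.
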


\begin{proof}
Fix an arbitrary epoch $t$.
By Assumption~\ref{as:convexity} and Jensen's inequality,
\[
c\!\bigl(\mathbf{r}^*(t),t\bigr)
= c\!\bigl(\lambda(t)\,\mathbf{r}^q(t)+(1{-}\lambda(t))\,\mathbf{r}^b(t),\,t\bigr)
\le \lambda(t)\,c\!\bigl(\mathbf{r}^q(t),t\bigr)+(1{-}\lambda(t))\,c\!\bigl(\mathbf{r}^b(t),t\bigr).
\]
The right-hand side is the cost of a randomized choice between two experts with mixture weight $\lambda(t)$, which is exactly the quantity governed by the Hedge algorithm.
Summing over $t=1,\dots,T$ and applying the two-expert Hedge regret bound \cite[Theorem~2.2]{cesa2006} yields~\eqref{eq:regret_bound}.
\end{proof}

The regret term $\frac{\ln 2}{\eta}+\frac{\eta T}{8}$ is minimized at $\eta^*=\sqrt{8\ln 2/T}$, yielding $O(\sqrt{T})$ cumulative regret and thus $O(1/\sqrt{T})$ vanishing average regret.
In practice, the decision horizon $T$ is not known in advance. We therefore set $\eta$ as a fixed constant calibrated to the expected episode length; Section~5 validates empirically that this choice provides effective adaptation across both shock and recovery regimes.

When the quantum-guided prior becomes unreliable, the surrogate loss $\hat{L}^q(t)$ increases relative to $\hat{L}^b(t)$, causing $\lambda(t)$ to decrease via~\eqref{eq:hedge_update}.
The scheduler thereby shifts weight toward the robust baseline without operator intervention.
Proposition~\ref{prop:regret} guarantees that the cost of this mixed policy never exceeds the cost of the better expert by more than $O(\sqrt{T})$ over any horizon, providing a worst-case safeguard against non-stationary prediction failures.

\section{Result Analysis}

To validate the effectiveness of Q-GARS, we constructed a discrete-event simulation platform. To satisfy the millisecond decision budgets established in Section 3, we deployed the evaluation environment on a high-performance computing cluster equipped with a local NVIDIA A40 GPU. This local coprocessor architecture eliminates cloud communication latency. For the active window size $K \le 100$ defined in Section 3, the resulting QUBO models contain up to $10{,}000$ binary variables. The local SQA solver resolves these models in sub-milliseconds. This satisfies the physical latency constraints of microservice scheduling. The experiments simulated tens of thousands of concurrent workflows using Monte Carlo methods to ensure statistical significance. The evaluation consists of three progressive phases:
\begin{enumerate}
    \item Structural Prior Gain: Quantifying the reduction in the surrogate objective brought by discrete sequence optimization under ideal predictive conditions.
    \item Worst-case Robustness: Evaluating the tail-risk control of the learning-augmented mechanism under varying stochastic noise levels.
    \item System Dynamics at Scale: Verifying long-term stability and automated recovery capabilities when the system faces simulated telemetry failures.
\end{enumerate}

\subsection{Performance Gain from Structural Prior}

First, we quantify the performance improvement brought about by the quantum-inspired structural prior. We generate a data set of directed acyclic graphs. The number of nodes per graph follows a uniform distribution between 10 and 50, with a maximum concurrency width ranging from 2 to 8. We compared the sequence generated by Q-GARS against a greedy baseline policy based on the shortest remaining processing time. To ensure a rigorous comparison, both discrete ordering results are fed into the NUM module described in Section 3 for continuous rate slicing.

Fig. \ref{fig:opt_gains}(a) shows the distribution of optimization gains among $N=10000$ independent samples. The structural prior significantly reduces the global weighted completion time, achieving a 2.1\% mean improvement, with the maximum improvement reaching 16.8\%. The scatter plot in Fig. \ref{fig:opt_gains}(b) further compares the solution quality. As the baseline objective value increases, the solution of Q-GARS consistently remains in the "Better Region" below the diagonal. This demonstrates the solving consistency of the algorithm within complex state spaces.

\begin{figure}[htbp]
    \centering
    \includegraphics[width=\linewidth]{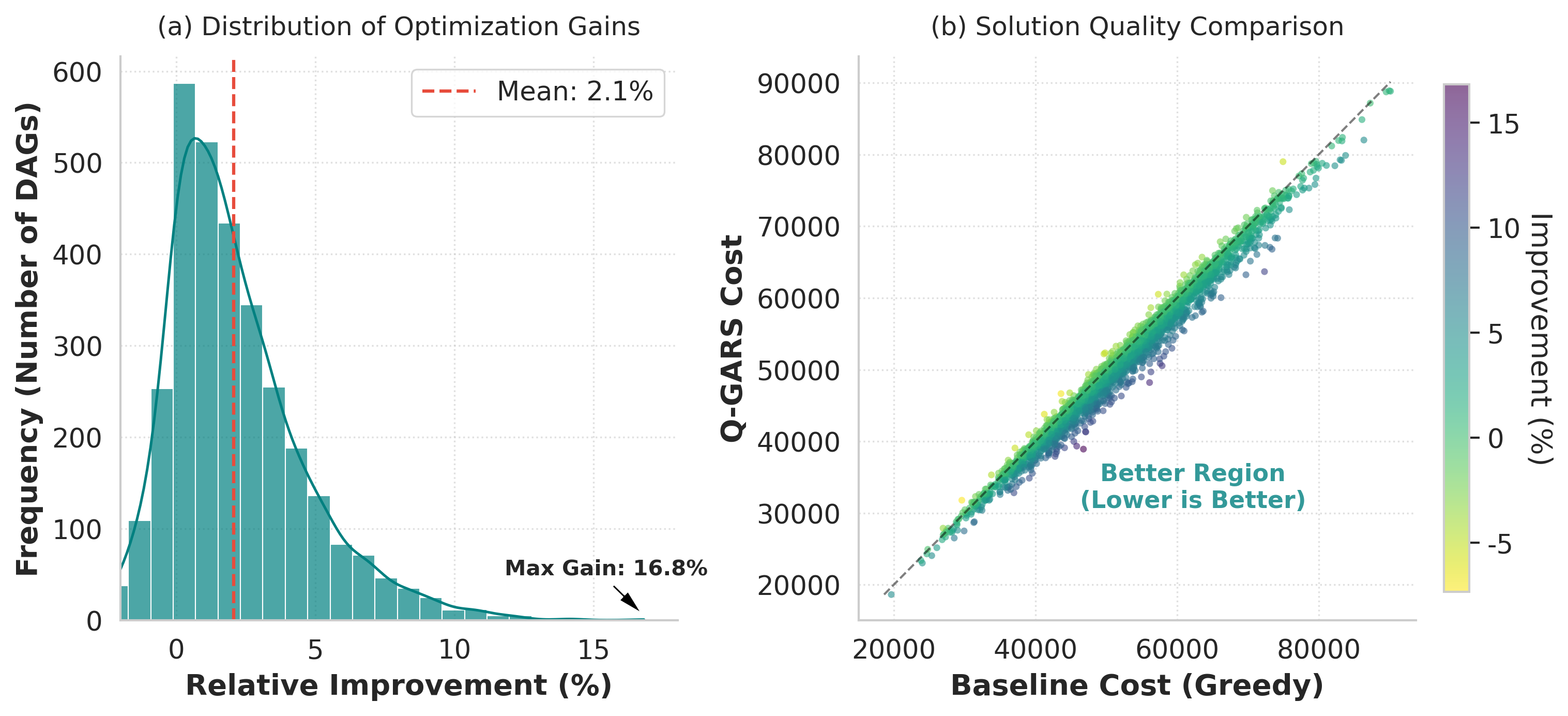}
    \caption{(a) Histogram showing the distribution of relative performance improvements over the greedy baseline. (b) Scatter plot comparing solution quality, indicating consistent improvements in high-cost regions.}
    \label{fig:opt_gains}
\end{figure}

\subsection{Resilience Under Stochastic Volatility}

Real-world cloud-edge environments are fraught with stochastic execution times. In the second phase, we parameterize the stochastic latency model using $\alpha$ to control the magnitude of fluctuations and inject tail latency. Fig. \ref{fig:resilience} visually demonstrates the performance of different scheduling mechanisms regarding the weighted completion time under uncertainty.

As shown in Fig. \ref{fig:resilience}(a), Q-GARS consistently maintains the lowest average weighted completion time at all uncertainty levels. In the near-deterministic regime ($\alpha \approx 0$), the static prior policy performs comparable to Q-GARS. As volatility increases (e.g. $\alpha \ge 0.25$), Q-GARS quickly establishes a gap and consistently outperforms both the static prior and the robust baseline policy. Even in the high-noise regime, the performance of the robust baseline merely approaches Q-GARS, but does not surpass it in our experiments. In a highly volatile setting ($\alpha=1.5$), the upper-tail cumulative probability distribution in Fig. \ref{fig:resilience}(b) shows that Q-GARS and the robust baseline significantly shift the worst completion times to the left compared to the static prior. This indicates that the adaptive trust parameter effectively mitigates extreme latency events. Overall, these observations are consistent with the guarantee of Proposition~\ref{prop:regret}. In the low-uncertainty regime, the ranking prior is accurate and Q-GARS tracks the static prior. In the high-uncertainty regime, the prior degrades, $\lambda(t)$ decreases via the Hedge update, and Q-GARS automatically falls back toward the robust baseline. In the intermediate regime, the mixed policy tracks whichever expert performs better. Proposition~\ref{prop:regret} guarantees that the cumulative cost of the mixed policy exceeds that of the better expert by at most $O(\sqrt{T})$; the crossover behavior of the three curves in Fig.~4(a) is an empirical manifestation of this theoretical property.

\begin{figure}[h]
    \centering
    \includegraphics[width=\linewidth]{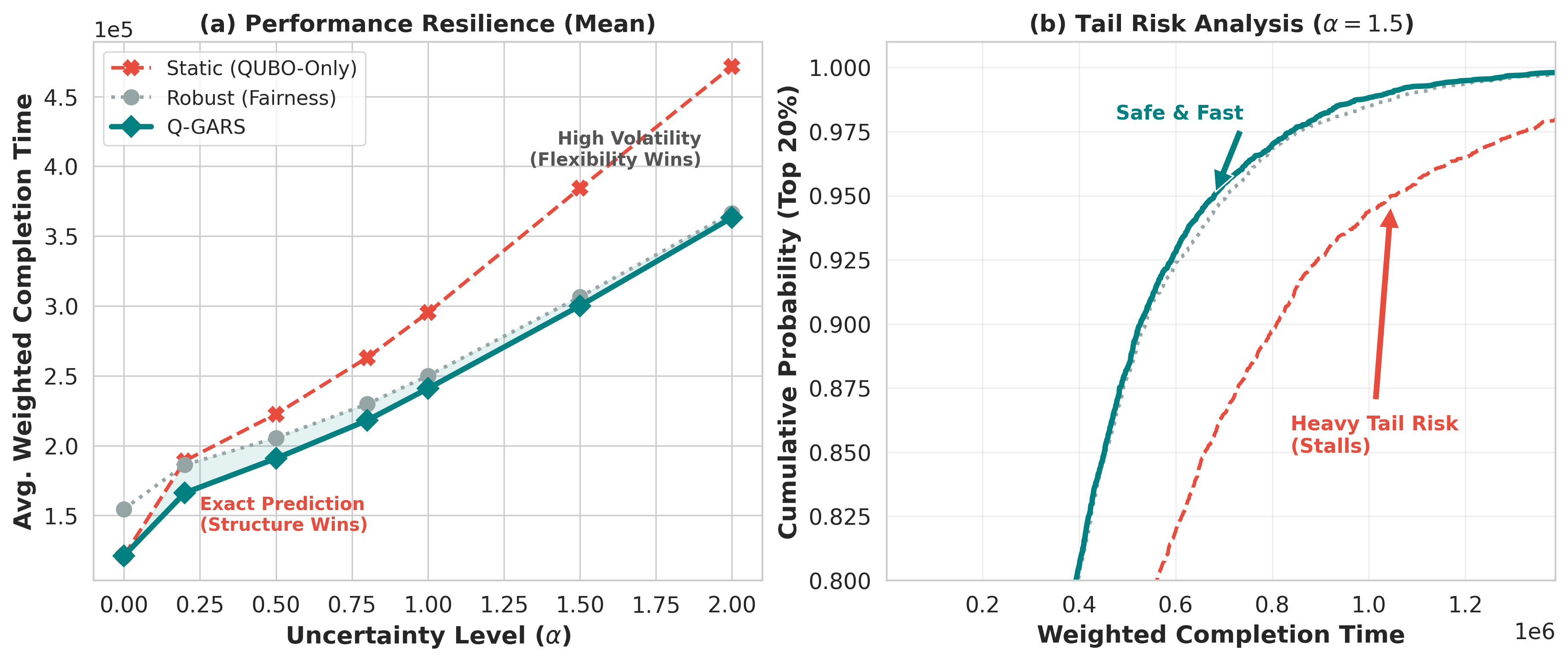}
    \caption{(a) Impact of increasing uncertainty level $\alpha$ on average weighted completion time. (b) Cumulative Distribution Function (CDF) of completion times under high volatility ($\alpha=1.5$), highlighting the mitigation of heavy tail risks.}
    \label{fig:resilience}
\end{figure}

\subsection{System Dynamics and Shock Recovery}

Finally, we evaluated large-scale system resilience through long-horizon simulations of 4096 independent parallel systems. We mark a shock interval from time step 300 to 900 with vertical dashed lines in Fig. \ref{fig:dynamics}. During this period, node failure rates and prediction errors increase sharply.

As shown in Fig. \ref{fig:dynamics}(a), the mean trust parameter between systems $\bar{\lambda}(t)$ drops rapidly during shock. This indicates that the shadow loss signal detects prediction degradation. The system moves toward safeguard mode via the exponential weight update rule detailed in Section 4. After the shock subsides, the trust parameter gradually recovers. The system re-enables the benefits of predictive priors.

Fig. \ref{fig:dynamics}(b) reports the 95th percentile queue backlog. Compared with non-adaptive baselines, Q-GARS significantly suppresses the peak backlog surge. Specifically, its peak is approximately 20\% to 30\% of the SRPT-pred baseline. Fig. \ref{fig:dynamics}(c) shows the blocked capacity ratio. Q-GARS maintains a lower capacity loss during shock interval and returns to near-zero levels fastest after recovery. Together, these results empirically demonstrate the graceful degradation and fast recovery capabilities of the framework in non-stationary environments.

\begin{figure}[htbp]
    \centering
    \includegraphics[width=\linewidth]{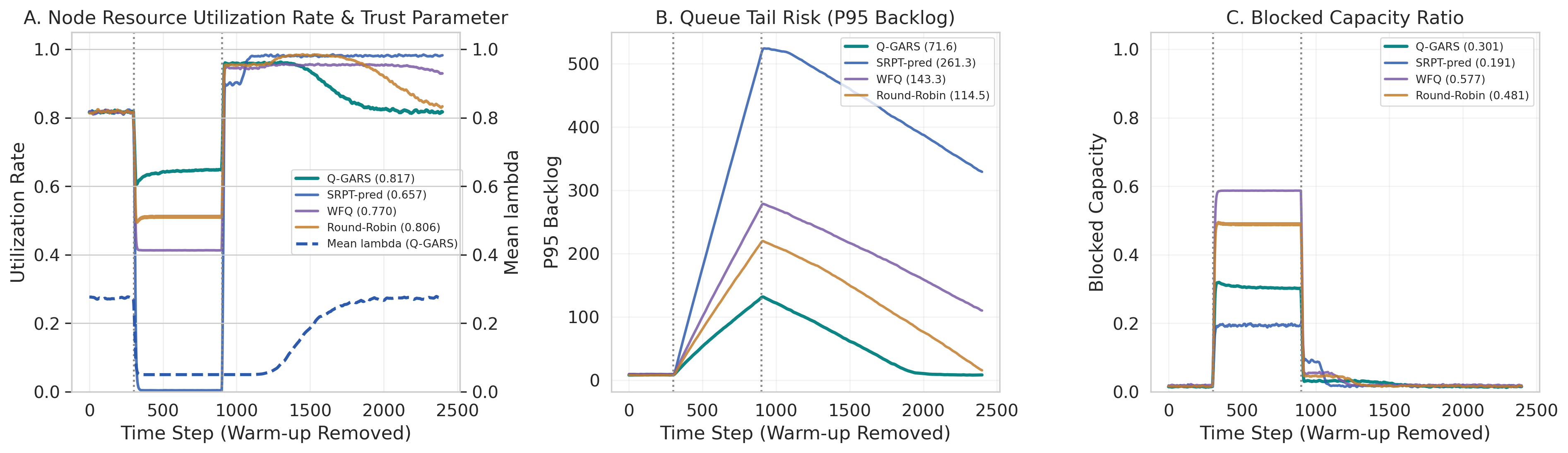}
    \caption{System dynamics under a simulated shock interval from time step 300 to 900. (a) The cross-system mean trust parameter $\bar{\lambda}(t)$ drops to trigger the safeguard mode. (b) Q-GARS effectively suppresses the peak surge of the 95th percentile queue backlog. (c) Blocked capacity ratio is minimized and recovers to near-zero levels fastest.}
    \label{fig:dynamics}
\end{figure}

\section{Conclusion}
This paper proposes a quantum-guided adaptive scheduling framework for the cloud-edge continuum, effectively mitigating head-of-line blocking in heterogeneous computing environments by integrating the combinatorial optimization capabilities of QUBO with the robustness of proportional fairness. Although applying quantum computing in the current Noisy Intermediate-Scale Quantum era faces physical limitations such as restricted qubit connectivity, limited coherence time, and high environmental stochasticity that can lead to diminishing returns and compromise global optimality, our dual-track mechanism successfully establishes a solver-agnostic interface between a combinatorial optimizer and the stochastic physical environment. As a bridge between quantum algorithms and classical real-time control systems, the adaptive control paradigm established in this framework will translate into potential latency advantages once future quantum hardware transcends current scaling limits.

\bibliographystyle{splncs04}
\bibliography{online_opt}
\end{document}